\documentclass[11pt,a4paper]{article}

\usepackage[T1]{fontenc}
\usepackage{lmodern}
\usepackage{microtype}
\usepackage[a4paper,margin=3cm]{geometry}
\usepackage{amsmath,amssymb,amsthm,mathtools}
\usepackage{enumitem}
\usepackage{booktabs}
\usepackage{xcolor}
\usepackage{hyperref}
\hypersetup{colorlinks=true,linkcolor=blue,citecolor=blue,urlcolor=blue}
\usepackage[numbers,sort&compress]{natbib}

\theoremstyle{plain}
\newtheorem{theorem}{Theorem}[section]
\newtheorem{proposition}[theorem]{Proposition}
\newtheorem{lemma}[theorem]{Lemma}
\newtheorem{corollary}[theorem]{Corollary}

\theoremstyle{definition}
\newtheorem{definition}[theorem]{Definition}
\newtheorem{assumption}[theorem]{Assumption}

\theoremstyle{remark}
\newtheorem{remark}[theorem]{Remark}

\newcommand{\R}{\mathbb{R}}

\newcommand{\eps}{\varepsilon}

\title{Sorting and Global Uniqueness in Two-Good HARA Economies with Many Patience Types}

\author{Andrea Loi\thanks{Dipartimento di Matematica e Informatica, Universit\`a di Cagliari, Cagliari, Italy. Email: \texttt{loi@unica.it}.}
\and Stefano Matta\thanks{Dipartimento di Scienze Economiche e Aziendali, Universit\`a di Cagliari, Cagliari, Italy. Email: \texttt{smatta@unica.it}.}}

\date{\today}

\begin{document}
\maketitle

\begin{abstract}
We study global uniqueness of competitive equilibrium in two-good pure-exchange
economies with heterogeneous impatience types and a common HARA Bernoulli utility.
The paper connects the CRRA sorting result of \citet{GeanakoplosWalsh2018} 
with the line of HARA uniqueness results developed in
\citet{LoiMatta2022,LoiMatta2024}.
In the CRRA case, ordered endowments provide a sorting mechanism for uniqueness.
In the HARA case, uniqueness is known to hold for arbitrary endowments under the
curvature bound $\gamma\le I/(I-1)$, where $I$ is the number of impatience types.
For two types, the curvature restriction can be removed under a monotone sorting
condition linking patience and endowment composition. The present paper shows
that this high-curvature HARA sorting mechanism is not specific to the two-type
case.

Our main result proves global uniqueness for any finite number of impatience
types and any $\gamma>1$. If types can be ordered so that more patient agents
hold weakly more of the first good and weakly less of the second, then the
equilibrium price is globally unique. Thus the paper extends the two-type
high-curvature HARA result to a genuinely multi-type setting and complements the
arbitrary-endowment low-curvature result by replacing the low-curvature
restriction with an economically interpretable sorting restriction.

In the CRRA subcase ($b=0$), the ordered-endowment condition coincides with that
of \citet{GeanakoplosWalsh2018}, and our corollary recovers their uniqueness
result. The contribution of the present paper is therefore not the sorting
condition itself but its reach: the same ordered heterogeneity in patience and
endowment composition rules out multiplicity throughout the shifted HARA case
($b>0$), for any finite number of types and any $\gamma>1$, through a global
coefficient-ratio argument.
\end{abstract}

\bigskip
\noindent\textbf{Keywords:} equilibrium uniqueness; HARA utility; heterogeneous agents; impatience; sorting; total positivity; symmetric polynomials.

\medskip
\noindent\textbf{JEL:} C62, D51, D58.

\section{Introduction}\label{sec:intro}

Uniqueness of competitive equilibrium is central to comparative statics, welfare analysis, and the empirical use of general equilibrium models.  If the same primitives support several equilibrium prices, then a policy experiment, a redistribution of endowments, or a calibration exercise may have no single prediction: small changes in primitives can move the economy across distinct equilibrium branches.  This issue is particularly relevant in two-good exchange economies interpreted as two-date economies, where the equilibrium price is an intertemporal price and heterogeneous impatience is a natural modelling device.

The Sonnenschein--Mantel--Debreu results show that standard assumptions on individual preferences place surprisingly weak restrictions on aggregate excess demand, even when preferences are smooth, monotone, and concave \citep{Debreu1974,Mantel1974,Sonnenschein1973}. The relevant question is therefore not whether uniqueness follows in unrestricted economies, but which economically meaningful restrictions restore a single-crossing aggregate market-clearing equation.

The closest point of departure for our analysis is \citet{GeanakoplosWalsh2018}, who study the same two-good impatience-type environment under a general Bernoulli utility. They study two-good economies with many impatience types and a common Bernoulli utility.  With identical endowments, nonincreasing absolute risk aversion implies downward-sloping aggregate demand at equilibrium and hence uniqueness and stability.  In the CRRA case, they also allow a monotone form of endowment heterogeneity (their Proposition~5): more patient types hold relatively more of one good and less of the other.  This is exactly the sorting condition studied below, so the ordered-endowment mechanism is not new in the CRRA case.  Their discussion also identifies the central economic difficulty for the general case: outside special cases, uniqueness should require heterogeneity not to be too large or badly aligned, but it is difficult to express this requirement in closed form.

Other recent approaches impose structure at the level of individual demand; for instance, \citet{Won2023} derives uniqueness results for heterogeneous CRRA economies by characterizing demand shapes, and \citet{TodaWalsh2024} survey recent advances on uniqueness and multiplicity.  Won's analysis remains within the CRRA class, while \citet{TodaWalsh2024} survey recent advances on
uniqueness and multiplicity and place such contributions within the broader literature. 

Our route differs both in scope and in method: it treats the HARA specification with $b\ge 0$ and $\gamma>1$, including the non-homothetic case $b>0$, and exploits the coefficient structure generated by the transformed market-clearing equation.
 HARA utility is a tractable and widely used subclass of DARA preferences, containing CRRA as the case $b=0$.  In \citet{LoiMatta2022}, we considered a two-good HARA economy with $I$ impatience types and arbitrary endowments.  The main result there shows that equilibrium is globally unique whenever
\[
\gamma\le \frac{I}{I-1}.
\]
This result is strong because it imposes no restriction on endowments.  Its limitation is that the admissible curvature region shrinks quickly with the number of types: for $I=3$ it becomes $\gamma\le3/2$, and as $I$ grows the threshold converges to one.  Thus, in high-curvature applications with several types, the low-curvature result of Loi and Matta~\cite{LoiMatta2022} does not provide a uniqueness criterion.

\citet{LoiMatta2024} address the high-curvature region for the two-type case.  For $I=2$, they show that uniqueness can be recovered for arbitrary $\gamma>1$ under a sorting restriction linking patience and endowment composition.  In our notation, if $\sigma_1<\sigma_2$, $e_1\le e_2$, and $f_1\ge f_2$, then the key coefficient determinant has the right sign and equilibrium is globally unique.  The natural next question is whether the same sorting logic extends beyond two types.

We give a positive answer.  The paper proves that the high-curvature sorting
mechanism identified in the two-type HARA economy of \citet{LoiMatta2024} extends
to every finite number of types.  More precisely, for any $\gamma>1$, global
uniqueness holds within the class of HARA economies in which impatience and
endowment composition are ordered in opposite directions across goods.  The
qualification HARA is essential: in the CRRA subcase $b=0$ the ordered-endowment
condition is already that of \citet{GeanakoplosWalsh2018}, and our result recovers
theirs (Remark~\ref{rem:crra-benchmark}).  The genuinely new content is the reach of
the mechanism into the non-homothetic HARA region $b>0$, for arbitrary $I$, through a
single global argument.  In this sense the two-type high-curvature theorem is the
first instance of a general multi-type \emph{HARA} sorting theorem, not an isolated
low-dimensional fact.

\medskip
\noindent\textbf{Main contribution.}
Let $\eps=1/\gamma$ and $\sigma_i=\beta_i^\eps$, where $\beta_i$ is type $i$'s
impatience parameter.  Denote by $(e_i,f_i)$ the endowment vector of type $i$.
The main sorting condition requires that types can be ordered so that
\begin{equation}\label{eq:intro-sorting}
0<\sigma_1<\cdots<\sigma_I,
\qquad
e_1\le\cdots\le e_I,
\qquad
f_1\ge\cdots\ge f_I .
\end{equation}
In words, more patient types are endowed with weakly more of the first good and
weakly less of the second good.

The proof is based on a transformation of the market-clearing condition.  Setting
$r=p^{1-\eps}$, the equilibrium equation can be written as
\begin{equation}\label{eq:intro-intersection}
 r^\theta=\frac{A(r)}{B(r)},
 \qquad r>0,
 \qquad \theta=\frac{\eps}{1-\eps}>0,
\end{equation}
where $A$ and $B$ are polynomials with positive coefficients generated by the
HARA demand system.  We write them as
\[
A(r)=\nu_{I-1}+\nu_{I-2}r+\cdots+\nu_0 r^{I-1},
\qquad
B(r)=\mu_I+\mu_{I-1}r+\cdots+\mu_1 r^{I-1},
\]
with $\nu_j>0$ and $\mu_j>0$.  Since the left-hand side of
\eqref{eq:intro-intersection} is strictly increasing, uniqueness follows if the
ratio $A(r)/B(r)$ is globally nonincreasing.

A sufficient condition for this monotonicity is the coefficient-ratio chain
\begin{equation}\label{eq:intro-chain}
\frac{\nu_{I-1}}{\mu_I}\ge
\frac{\nu_{I-2}}{\mu_{I-1}}\ge\cdots\ge
\frac{\nu_0}{\mu_1}.
\end{equation}

Equivalently, the adjacent cross-product inequalities associated with the two
coefficient arrays have the sign required to make the ratio $A(r)/B(r)$
nonincreasing. We refer to this condition as a TP$_2$ ratio chain, using the term
only as a shorthand for the corresponding adjacent $2\times2$ minors of the
two-row coefficient array, in the spirit of total positivity
\citep{Karlin1968}.

The key step of the paper is to prove that the sorting condition
\eqref{eq:intro-sorting} implies the chain \eqref{eq:intro-chain}.  After the
HARA endowment shift, the coefficients of $A$ and $B$ can be represented as
weighted sums of elementary symmetric polynomials.  

The ordering in \eqref{eq:intro-sorting} then yields precisely the coefficient-ratio chain \eqref{eq:intro-chain}, by showing that the relevant adjacent cross-product inequalities hold.
Hence $A(r)/B(r)$ is globally nonincreasing, whereas
$r^\theta$ is strictly increasing, and the transformed market-clearing equation
has at most one solution.

We focus deliberately on globally sorted heterogeneity. Within this sorted HARA class, uniqueness does not require a small-dispersion assumption: endowment dispersion may be large, provided endowment composition is ordered with patience. The complementary case of non-sorted economies close to the equal-share benchmark concerns a different region of the primitive space and is left for future work.

\medskip
\noindent\textbf{Organization.}
Section~\ref{sec:model} introduces the economy and derives the transformed market-clearing equation.  Section~\ref{sec:ratio} proves the general coefficient ratio criterion.  Section~\ref{sec:sorting} proves the main sorting theorem and the global uniqueness corollary.  Section~\ref{sec:econ} discusses the economic interpretation of sorting. Section~\ref{sec:conclusion} concludes.

\section{The economy and the transformed market-clearing equation}\label{sec:model}

\subsection{Primitives}

There are two goods and $I\ge2$ impatience types.  Type $i$ is endowed with $(e_i,f_i)\in\R_+^2$ and has preferences
\begin{equation}\label{eq:Ui}
U_i(x_i,y_i)=u_H(x_i)+\beta_i u_H(y_i),
\qquad \beta_i>0,
\end{equation}
where
\begin{equation}\label{eq:HARA}
u_H(z)=\frac{\gamma}{1-\gamma}\left(b+\frac{a}{\gamma}z\right)^{1-\gamma},
\qquad a>0,
\qquad b\ge0,
\qquad \gamma>1.
\end{equation}
We exclude the logarithmic case $\gamma=1$.  Set
\begin{equation}\label{eq:eps-sigma}
\eps:=\frac{1}{\gamma}\in(0,1),
\qquad
\sigma_i:=\beta_i^{\eps},
\qquad
h:=\frac{b}{a\eps},
\end{equation}
and define shifted endowments
\begin{equation}\label{eq:shifted}
E_i:=e_i+h,
\qquad
F_i:=f_i+h.
\end{equation}
We assume throughout that aggregate endowments in both goods are positive and that the shifted endowments used in the ratio arguments are positive. 

Let $p>0$ be the price of good $x$ in units of good $y$.  Type $i$ solves
\[
\max_{x_i,y_i\ge0} u_H(x_i)+\beta_i u_H(y_i)
\quad\text{s.t.}\quad
p x_i+y_i=p e_i+f_i.
\]
Write
\[
r_x:=\sum_{i=1}^I e_i,
\qquad
r_y:=\sum_{i=1}^I f_i.
\]

\subsection{Individual demand}

\begin{lemma}[Demand]\label{lem:demand}
For each \(p>0\), type \(i\)'s demand for good \(x\) is 

\begin{equation}\label{eq:demand-x}
x_i(p)=\frac{b-bp^\eps\sigma_i+a\eps(p e_i+f_i)}{a\eps(p+\sigma_i p^\eps)}.
\end{equation}
\end{lemma}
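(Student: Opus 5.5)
We derive the demand from the first-order conditions of type $i$'s problem, treating the solution as interior. Write $u_H'(z)=a\,(b+\tfrac{a}{\gamma}z)^{-\gamma}$, so that $u_H'(z)=a\,(b+a\eps z)^{-1/\eps}$ since $\gamma=1/\eps$. Because $u_H$ is strictly increasing and strictly concave on its domain and the budget set is compact and convex, the maximizer exists and is unique. We will assume, as the formula implicitly does, that it satisfies the tangency condition with the budget binding. The Lagrangian conditions give
\[
u_H'(y_i)\,\beta_i = \frac{1}{p}\,u_H'(x_i),
\qquad\text{i.e.}\qquad
\beta_i\,(b+a\eps y_i)^{-1/\eps}=\frac1p\,(b+a\eps x_i)^{-1/\eps}.
\]

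Raising both sides to the power $-\eps$ and using $\sigma_i=\beta_i^{\eps}$ gives a linear relation between the two shifted consumptions:
\[
b+a\eps y_i=\sigma_i\,p^{\eps}\,(b+a\eps x_i).
\]
We then substitute $y_i=pe_i+f_i-px_i$ from the budget constraint, which yields
\[
b+a\eps(pe_i+f_i)-a\eps p\,x_i=\sigma_i p^\eps b+\sigma_i p^\eps a\eps\,x_i .
\]
This equation is linear in $x_i$. Solving it gives
\[
x_i=\frac{b-bp^\eps\sigma_i+a\eps(pe_i+f_i)}{a\eps\,(p+\sigma_i p^\eps)},
\]
which is \eqref{eq:demand-x}. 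The denominator is strictly positive because $a,\eps,p,\sigma_i>0$.

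The algebra is routine; the only delicate point is justifying interiority. The formula is taken as valid exactly when the resulting $x_i$ and $y_i$ are nonnegative and lie in the domain where $b+a\eps z>0$; for $b>0$ this holds automatically for $z\ge 0$. This is consistent with the paper's standing positivity assumption on the shifted endowments.

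One could alternatively phrase the computation in shifted variables: with $X_i=x_i+h$ and $Y_i=y_i+h$, tangency reads $Y_i=\sigma_i p^\eps X_i$, and the budget becomes $pX_i+Y_i=pE_i+F_i$. This gives $X_i=(pE_i+F_i)/(p+\sigma_ip^\eps)$, which after subtracting $h$ recovers the same expression. This form also anticipates the role of $E_i,F_i$ later in the paper.
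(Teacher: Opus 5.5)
Your proof is correct and follows the same route as the paper: you take the tangency condition $u_H'(x_i)=\beta_i p\,u_H'(y_i)$ and raise it to the power $-\eps$ to get $b+a\eps y_i=\sigma_i p^\eps(b+a\eps x_i)$, then substitute the budget constraint and solve the resulting linear equation. Your interiority caveat, which the paper leaves implicit and which matters for $b>0$ because corner solutions can then occur, and your shifted-variable check $X_i=(pE_i+F_i)/(p+\sigma_i p^\eps)$ are both sound additions.
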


\begin{proof}
The first-order condition is
\[
u_H'(x_i)=\beta_i p\,u_H'(y_i).
\]
Since
\[
u_H'(z)=a\left(b+\frac{a}{\gamma}z\right)^{-\gamma},
\]
we obtain
\[
b+\frac{a}{\gamma}y_i=\sigma_i p^\eps\left(b+\frac{a}{\gamma}x_i\right).
\]
Using $y_i=p e_i+f_i-px_i$ and $\eps=1/\gamma$ gives \eqref{eq:demand-x}.
\end{proof}

Aggregate excess demand for good $x$ is
\begin{equation}\label{eq:Z}
Z(p):=\sum_{i=1}^I x_i(p)-r_x.
\end{equation}
An equilibrium price is a solution $p^*>0$ of $Z(p^*)=0$.

\subsection{Elementary symmetric polynomials}

The coefficient representation below uses elementary symmetric polynomials in the patience vector \(\sigma\). 
This is the same algebraic device used in \citet{LoiMatta2022}; we recall the notation here to keep the argument self-contained.

For a vector $\sigma=(\sigma_1,\ldots,\sigma_I)$ and an integer
$t=1,\ldots,I$, let $S_t(\sigma)$ denote the elementary symmetric polynomial
of degree $t$, namely
\[
S_t(\sigma)
=
\sum_{1\le i_1<\cdots<i_t\le I}
\sigma_{i_1}\cdots\sigma_{i_t}.
\]
We use the conventions
\[
S_0(\sigma)=1,
\qquad
S_t(\sigma)=0 \quad\text{if } t<0 \text{ or } t>I.
\]
For each $i$, let $\sigma_{-i}$ be the vector obtained from $\sigma$ by deleting
its $i$th component.  Thus $S_t(\sigma_{-i})$ is the elementary symmetric
polynomial of degree $t$ computed on all components of $\sigma$ except
$\sigma_i$.  We shall repeatedly use the identity
\begin{equation}\label{eq:St-identity}
S_t(\sigma)-S_t(\sigma_{-i})
=
\sigma_i S_{t-1}(\sigma_{-i}).
\end{equation}

\subsection{Coefficient representation}

Define, for $t=1,\ldots,I$,
\begin{equation}\label{eq:mu-def}
\mu_t:=\sum_{i=1}^I E_i\sigma_i S_{t-1}(\sigma_{-i}),
\end{equation}
and for $t=0,\ldots,I-1$,
\begin{equation}\label{eq:nu-def}
\nu_t:=\sum_{i=1}^I F_i S_t(\sigma_{-i}).
\end{equation}
All these coefficients are positive under the standing assumptions.

\noindent Finally, introduce the transformed price
\begin{equation}\label{eq:rtheta}
r:=p^{1-\eps}>0,
\qquad
\theta:=\frac{\eps}{1-\eps}=\frac{1}{\gamma-1}>0.
\end{equation}

\begin{proposition}[Market clearing as a one-dimensional intersection]\label{prop:intersection}
The equilibrium condition $Z(p)=0$ is equivalent to
\begin{equation}\label{eq:intersection}
r^\theta=\frac{A(r)}{B(r)},
\qquad r>0,
\end{equation}
where
\begin{equation}\label{eq:A-def}
A(r):=\sum_{t=0}^{I-1}\nu_t r^{I-1-t}
=\nu_{I-1}+\nu_{I-2}r+\cdots+\nu_0 r^{I-1},
\end{equation}
and
\begin{equation}\label{eq:B-def}
B(r):=\sum_{t=1}^{I}\mu_t r^{I-t}
=\mu_I+\mu_{I-1}r+\cdots+\mu_1 r^{I-1}.
\end{equation}
\end{proposition}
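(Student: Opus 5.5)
The plan is to rewrite individual demand from Lemma~\ref{lem:demand} in terms of the shifted endowments \eqref{eq:shifted}, clear denominators in $Z(p)=0$, and read off the coefficients as elementary symmetric polynomials.

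\emph{Step 1: demand in shifted form.} Since $h=b/(a\eps)$, we have
\[
a\eps(pE_i+F_i)=a\eps(pe_i+f_i)+bp+b.
\]
Hence the numerator in \eqref{eq:demand-x} equals $a\eps(pE_i+F_i)-b(p+\sigma_i p^\eps)$. This gives
\[
x_i(p)+h=\frac{pE_i+F_i}{p+\sigma_ip^\eps}=\frac{pE_i+F_i}{p^\eps(r+\sigma_i)},
\qquad r=p^{1-\eps}.
\]
Summing over $i$ and adding $Ih$ to both sides of $\sum_i x_i=r_x$ shows that $Z(p)=0$ is equivalent to
\[
\sum_i\frac{pE_i+F_i}{p^\eps(r+\sigma_i)}=\sum_i E_i .
\]

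\emph{Step 2: isolate $p^\eps$.} Use $p=p^\eps r$ and write $E_i=E_i(r+\sigma_i)/(r+\sigma_i)$ on the right-hand side. The $rE_i$ terms then cancel, and the condition becomes
\[
p^\eps\sum_i\frac{E_i\sigma_i}{r+\sigma_i}=\sum_i\frac{F_i}{r+\sigma_i}.
\]
Multiply by $\prod_j(r+\sigma_j)>0$. This is an equivalence, because every factor is positive for $r>0$ under the standing positivity assumptions ($\sigma_i>0$). The condition becomes
\[
p^\eps\sum_iE_i\sigma_i\prod_{j\ne i}(r+\sigma_j)=\sum_iF_i\prod_{j\ne i}(r+\sigma_j).
\]

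\emph{Step 3: coefficients.} Expand
\[
\prod_{j\ne i}(r+\sigma_j)=\sum_{t=0}^{I-1}S_t(\sigma_{-i})\,r^{I-1-t}.
\]
By \eqref{eq:nu-def}, the right-hand side is then exactly $A(r)$. By \eqref{eq:mu-def}, after the index shift $t\mapsto t+1$, the left-hand sum is $\sum_{t=1}^{I}\mu_t r^{I-t}=B(r)$. Finally,
\[
p^\eps=(p^{1-\eps})^{\eps/(1-\eps)}=r^\theta .
\]
Since $B(r)>0$, dividing by it gives \eqref{eq:intersection}. Because $p\mapsto r=p^{1-\eps}$ is a bijection of $(0,\infty)$ (as $0<\eps<1$), the equivalence holds in terms of $r>0$.

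The only delicate step is the algebra in Step 1, where the shift must absorb the $b$-terms exactly. Everything else is bookkeeping. Positivity of the denominators, which is needed for the equivalence, follows from the standing assumptions.
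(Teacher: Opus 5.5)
Your proof is correct and reaches the same identity $A(r)=p^\eps B(r)$ as the paper, but you order the algebra differently. The paper substitutes the unshifted demand formula, multiplies through by $a\eps\prod_i(p+\sigma_ip^\eps)$, separates the terms with and without the factor $p^\eps$, and expands in powers of $r$. Only then, coefficient by coefficient, does it absorb $b$ into $h$ and convert $r_xS_t(\sigma)-\sum_ie_iS_t(\sigma_{-i})$ into $\sum_ie_i\sigma_iS_{t-1}(\sigma_{-i})$ using \eqref{eq:St-identity}.

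You shift first, obtaining $x_i+h=(pE_i+F_i)/\bigl(p^\eps(r+\sigma_i)\bigr)$, so in shifted endowments HARA demand has the CRRA form. You then cancel the $rE_i$ terms as rational functions, via $E_i-rE_i/(r+\sigma_i)=E_i\sigma_i/(r+\sigma_i)$, before clearing denominators. Multiplying that identity by $\prod_j(r+\sigma_j)$ and reading off coefficients gives exactly \eqref{eq:St-identity}. So your route uses the same fact in generating-function form and never needs it explicitly.

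Your ordering buys three things. The disappearance of the $r^I$ term is automatic, whereas in the paper it requires $r_x-\sum_ie_i=0$, which is left implicit. You also justify each step as an equivalence: positivity of $\prod_j(r+\sigma_j)$ and of $B(r)$, and bijectivity of $p\mapsto p^{1-\eps}$. The paper's route is more mechanical, but it keeps the computation in the elementary-symmetric-polynomial language that the rest of the paper, notably Lemma~\ref{lem:rho-monotone}, relies on.
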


\begin{proof}
Substitute \eqref{eq:demand-x} into $Z(p)=0$ and multiply by the positive common denominator
\[
a\eps\prod_{i=1}^I(p+\sigma_i p^\eps).
\]
Using $p+\sigma_i p^\eps=p^\eps(r+\sigma_i)$, where $r=p^{1-\eps}$,
and then dividing by the positive factor $p^{(I-1)\eps}$, the market-clearing
condition becomes
\[
\sum_{i=1}^I\left[b+a\eps f_i+p^\eps(a\eps e_i r-b\sigma_i)\right]
\prod_{j\ne i}(r+\sigma_j)
-a\eps r_x p^\eps\prod_{i=1}^I(r+\sigma_i)=0.
\]
The terms not multiplied by $p^\eps$ are
\[
a\eps\sum_{i=1}^I F_i\prod_{j\ne i}(r+\sigma_j)
=a\eps\sum_{t=0}^{I-1}\nu_t r^{I-1-t}=a\eps A(r).
\]
Moving the terms multiplied by $p^\eps$ to the right-hand side gives
\[
a\eps A(r)
=
p^\eps\left[
a\eps r_x\prod_{i=1}^I(r+\sigma_i)
-\sum_{i=1}^I(a\eps e_i r-b\sigma_i)\prod_{j\ne i}(r+\sigma_j)
\right].
\]
For $t=1,\ldots,I$, the coefficient of $r^{I-t}$ in the bracket on the
right-hand side is
\[
a\eps\left[
r_xS_t(\sigma)-\sum_{i=1}^I e_iS_t(\sigma_{-i})
+h\sum_{i=1}^I\sigma_iS_{t-1}(\sigma_{-i})
\right].
\]
By \eqref{eq:St-identity},
\[
r_xS_t(\sigma)-\sum_{i=1}^I e_iS_t(\sigma_{-i})
=\sum_{i=1}^I e_i\sigma_iS_{t-1}(\sigma_{-i}),
\]
so this coefficient is
\[
a\eps\sum_{i=1}^I E_i\sigma_iS_{t-1}(\sigma_{-i})=a\eps\mu_t.
\]
Hence
\[
a\eps A(r)=p^\eps a\eps B(r),
\]
or equivalently
\[
A(r)-p^\eps B(r)=0.
\]
Since $p^\eps=(p^{1-\eps})^{\eps/(1-\eps)}=r^\theta$, this is exactly \eqref{eq:intersection}.
\end{proof}

\begin{remark}
The polynomial proofs in \citet{LoiMatta2022} and \citet{LoiMatta2024} often approximate $\eps$ by rational numbers in order to apply root-counting arguments to polynomials in $p^{1/n}$.  The present monotonicity proof does not require this step.  The change of variable $r=p^{1-\eps}$ is valid for every real $\eps\in(0,1)$.
\end{remark}

\section{A coefficient ratio criterion for uniqueness}\label{sec:ratio}

The transformed market-clearing equation \eqref{eq:intersection} has a strictly increasing left-hand side.  Hence uniqueness follows if the right-hand side $A(r)/B(r)$ is nonincreasing.

\begin{definition}[Ratio chain]\label{def:ratio-chain}
The coefficient ratio chain is
\begin{equation}\label{eq:ratio-chain}
\frac{\nu_{I-1}}{\mu_I}\ge
\frac{\nu_{I-2}}{\mu_{I-1}}\ge\cdots\ge
\frac{\nu_0}{\mu_1}.
\end{equation}
Equivalently, the adjacent cross-product inequalities
\begin{equation}\label{eq:D-def}
D_t:=\mu_{t+1}\nu_{t+1}-\mu_{t+2}\nu_t\ge0,
\qquad t=0,\ldots,I-2,
\end{equation}
hold.
Thus $D_t\ge0$ is precisely the link
\[
\frac{\nu_{t+1}}{\mu_{t+2}}\ge \frac{\nu_t}{\mu_{t+1}}
\]
of the chain, written with indices increasing from $0$ to $I-2$.
\end{definition}

The inequalities $D_t\ge0$ are precisely the adjacent $2\times2$ minors of the two-row coefficient array formed by $(\mu_1,\ldots,\mu_I)$ and $(\nu_0,\ldots,\nu_{I-1})$, aligned as in \eqref{eq:D-def}. We use the term TP$_2$ only in this limited coefficient-array sense.

\begin{lemma}[Ratio chain implies monotonicity]\label{lem:ratio-monotone}
If \eqref{eq:ratio-chain} holds, then $A(r)/B(r)$ is nonincreasing on $(0,\infty)$.
\end{lemma}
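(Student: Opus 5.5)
We reindex both polynomials in ascending powers of $r$ and reduce the claim to a sign statement about a numerator. Set $a_k:=\nu_{I-1-k}$ and $b_k:=\mu_{I-k}$ for $k=0,\ldots,I-1$. Then
\[
A(r)=\sum_{k=0}^{I-1}a_k r^k,\qquad B(r)=\sum_{k=0}^{I-1}b_k r^k,
\]
with all $a_k,b_k>0$, and the chain \eqref{eq:ratio-chain} reads $a_0/b_0\ge a_1/b_1\ge\cdots\ge a_{I-1}/b_{I-1}$. In other words, the sequence $\rho_k:=a_k/b_k$ is nonincreasing in $k$. Since $B>0$ on $(0,\infty)$, the sign of $(A/B)'$ is the sign of
\[
N(r):=A'(r)B(r)-A(r)B'(r),
\]
so it suffices to show $N(r)\le0$ for every $r>0$.

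First we expand $N$ as a double sum:
\[
N(r)=\sum_{j,k}(j-k)\,a_jb_k\,r^{j+k-1}.
\]
Next we symmetrize by pairing the term $(j,k)$ with the term $(k,j)$. This gives
\[
N(r)=\sum_{j<k}(k-j)\,(a_kb_j-a_jb_k)\,r^{j+k-1}.
\]
For $j<k$ we can write $a_kb_j-a_jb_k=b_jb_k(\rho_k-\rho_j)$. This is $\le0$ because the ratios are nonincreasing and all $b$'s are positive. Every remaining factor, namely $k-j$ and $r^{j+k-1}$, is positive on $(0,\infty)$, so $N(r)\le0$ there. Hence $A/B$ is nonincreasing.

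An alternative route gives the same conclusion. One can write $A(r)/B(r)=\sum_k w_k(r)\rho_k$ with weights $w_k(r)=b_kr^k/B(r)$. These weights form a family of probability distributions on $\{0,\ldots,I-1\}$ that increases in $r$ in the likelihood-ratio order. Averaging a nonincreasing function $\rho_k$ against such a family yields a nonincreasing function of $r$. The direct derivative computation above is more self-contained, so I would present that one.

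The only step needing care is the index bookkeeping. One must check that the chain in Definition~\ref{def:ratio-chain} really corresponds to nonincreasing $a_k/b_k$ in ascending powers. Concretely, the pair $(\nu_{I-1},\mu_I)$ is the constant term and $(\nu_0,\mu_1)$ is the top coefficient, both read off from \eqref{eq:A-def}–\eqref{eq:B-def}. One must also confirm that the adjacent inequalities $D_t\ge0$ imply the full chain, and hence all pairwise comparisons $\rho_k\le\rho_j$ for $j<k$. This holds by transitivity, since the $\mu$'s are positive. Everything else is routine.
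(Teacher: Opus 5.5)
Your proof is correct and is essentially the paper's argument: the same reindexing $a_k=\nu_{I-1-k}$, $b_k=\mu_{I-k}$, the same symmetrized expansion of $A'B-AB'$ over pairs $k<\ell$, and the same termwise sign conclusion drawn from the ratio chain. The likelihood-ratio-order alternative you mention is not needed, and the derivative computation you chose to present is exactly what the paper uses.
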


\begin{proof}
Write
\[
A(r)=\sum_{k=0}^{I-1}a_kr^k,
\qquad
B(r)=\sum_{k=0}^{I-1}b_kr^k,
\]
where $a_k=\nu_{I-1-k}$ and $b_k=\mu_{I-k}$.  The chain \eqref{eq:ratio-chain} says
\[
\frac{a_0}{b_0}\ge\frac{a_1}{b_1}\ge\cdots\ge\frac{a_{I-1}}{b_{I-1}}.
\]
A direct computation gives
\begin{equation}\label{eq:derivative-general}
A'(r)B(r)-A(r)B'(r)
=\sum_{0\le k<\ell\le I-1}(\ell-k)(a_\ell b_k-a_kb_\ell)r^{k+\ell-1}.
\end{equation}
For $\ell>k$, the ratio chain implies $a_\ell/b_\ell\le a_k/b_k$, hence $a_\ell b_k-a_kb_\ell\le0$.  Therefore $A'B-AB'\le0$ and $(A/B)'\le0$.
\end{proof}

\begin{theorem}[Uniqueness under the ratio chain]\label{thm:ratio-unique}
If the coefficient ratio chain \eqref{eq:ratio-chain} holds, then the equilibrium price is globally unique.
\end{theorem}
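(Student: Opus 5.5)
The plan is to combine Proposition~\ref{prop:intersection} with Lemma~\ref{lem:ratio-monotone}. By Proposition~\ref{prop:intersection}, a price $p>0$ is an equilibrium if and only if $r=p^{1-\eps}$ solves $r^\theta=A(r)/B(r)$ on $(0,\infty)$. Since $\eps\in(0,1)$, the map $p\mapsto p^{1-\eps}$ is a strictly increasing bijection of $(0,\infty)$ onto itself. It therefore suffices to show that this transformed equation has at most one root $r>0$; equilibrium prices and roots in $r$ are then in one-to-one correspondence.

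Define $\Phi(r):=r^\theta-A(r)/B(r)$ on $(0,\infty)$. This is well defined and continuous because $B$ has strictly positive coefficients $\mu_t$, so $B(r)>0$ for $r>0$. Since $\theta=1/(\gamma-1)>0$, the term $r^\theta$ is strictly increasing. Under the ratio chain \eqref{eq:ratio-chain}, Lemma~\ref{lem:ratio-monotone} shows that $A/B$ is nonincreasing. Hence $\Phi$ is strictly increasing, as the sum of a strictly increasing and a nondecreasing function.

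A strictly increasing function vanishes at most once. So if $r_1<r_2$ were both roots, we would get $0=\Phi(r_1)<\Phi(r_2)=0$, which is a contradiction. Mapping back through $p=r^{1/(1-\eps)}$ gives at most one equilibrium price.

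If the statement is meant to include existence as well, add the following boundary argument. As $r\to0^+$, $\Phi(r)\to -\nu_{I-1}/\mu_I<0$. As $r\to\infty$, $r^\theta\to\infty$ while $A/B\to\nu_0/\mu_1$, which is finite, so $\Phi(r)\to\infty$. The intermediate value theorem then gives a root. Existence also follows from standard arguments given positive aggregate endowments.

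None of these steps is a real obstacle. The only care needed is that the equivalence in Proposition~\ref{prop:intersection} is an exact bijection: the multiplied denominator $a\eps\prod_i(p+\sigma_i p^\eps)$ and the factor $p^{(I-1)\eps}$ are strictly positive for $p>0$, so no roots are created or lost. The substantive difficulty, verifying the chain \eqref{eq:ratio-chain} from primitives, is deferred to the sorting theorem.
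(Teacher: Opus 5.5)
Your proposal is correct and follows the paper's proof: the strictly increasing $r^\theta$ can cross the nonincreasing $A/B$ (Lemma~\ref{lem:ratio-monotone}) at most once, and existence comes from the boundary limits $-\nu_{I-1}/\mu_I<0$ as $r\downarrow 0$ and $+\infty$ as $r\to\infty$. You also state explicitly that $p\mapsto p^{1-\varepsilon}$ is a bijection of $(0,\infty)$, so uniqueness in $r$ gives uniqueness in $p$; the paper leaves this step implicit.
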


\begin{proof}
By Proposition~\ref{prop:intersection}, equilibria are the solutions of
\[
r^\theta=A(r)/B(r).
\]
The left-hand side is strictly increasing from $0$ to $+\infty$ on $(0,\infty)$.  By Lemma~\ref{lem:ratio-monotone}, the right-hand side is nonincreasing and positive.  Hence the two sides can cross at most once.  They cross at least once because, as $r\downarrow0$, $r^\theta-A(r)/B(r)\to-\nu_{I-1}/\mu_I<0$, while as $r\to\infty$, $r^\theta-A(r)/B(r)\to+\infty$.  Thus there is exactly one equilibrium price.
\end{proof}

\section[Sorting implies the TP2 ratio chain]{Sorting implies the TP$_2$ ratio chain}\label{sec:sorting}

We now prove the main result: sorting of patience and endowment composition implies the ratio chain \eqref{eq:ratio-chain}.  The proof has two ingredients.  First, the symmetric-polynomial ratio $S_{t+1}/S_t$ is increasing in each argument.  Second, a weighted average of a decreasing sequence falls when the likelihood ratio of the weights shifts mass toward higher indices.

\begin{assumption}[Sorting]\label{ass:sorting}
Types are ordered so that
\begin{equation}\label{eq:sorting}
0<\sigma_1<\cdots<\sigma_I,
\qquad
E_1\le\cdots\le E_I,
\qquad
F_1\ge\cdots\ge F_I.
\end{equation}
Equivalently, since $h$ is common to all types,
\[
e_1\le\cdots\le e_I,
\qquad
f_1\ge\cdots\ge f_I.
\]
\end{assumption}

\begin{lemma}[Monotonicity of symmetric-polynomial ratios]\label{lem:rho-monotone}
Fix $t\in\{0,\ldots,I-2\}$ and define
\begin{equation}\label{eq:rho-def}
\rho_i^{(t)}:=\frac{S_{t+1}(\sigma_{-i})}{S_t(\sigma_{-i})},
\qquad i=1,\ldots,I.
\end{equation}
If $0<\sigma_1<\cdots<\sigma_I$, then
\begin{equation}\label{eq:rho-decreasing}
\rho_1^{(t)}\ge\rho_2^{(t)}\ge\cdots\ge\rho_I^{(t)}.
\end{equation}
The inequalities are strict when the order of the $\sigma_i$ is strict.
\end{lemma}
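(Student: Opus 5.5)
We compare two adjacent indices $i$ and $i+1$ (it suffices to show $\rho_i^{(t)}\ge\rho_{i+1}^{(t)}$ for each $i$) by isolating the two coordinates in which $\sigma_{-i}$ and $\sigma_{-(i+1)}$ differ. Let $\tau$ denote the vector $\sigma$ with both $\sigma_i$ and $\sigma_{i+1}$ removed, and write $s_k:=S_k(\tau)$, with the conventions of the paper ($s_{-1}=0$, $s_k=0$ for $k>I-2$). By the identity \eqref{eq:St-identity} applied once, $\sigma_{-i}$ is $\tau$ with $\sigma_{i+1}$ appended and $\sigma_{-(i+1)}$ is $\tau$ with $\sigma_i$ appended. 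Hence
\[
S_k(\sigma_{-i})=s_k+\sigma_{i+1}s_{k-1},
\qquad
S_k(\sigma_{-(i+1)})=s_k+\sigma_i s_{k-1}.
\]
In words, each of the two ratios is the ratio $S_{t+1}/S_t$ of a vector obtained by appending a single variable $x$ to $\tau$. The claim is therefore equivalent to showing that
\[
g(x):=\frac{s_{t+1}+x s_t}{s_t+x s_{t-1}}
\]
is nondecreasing in $x>0$, and strictly increasing when appropriate.

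The next step is a direct computation of the sign of $g'$. The numerator of $g'(x)$ equals
\[
s_t(s_t+xs_{t-1})-s_{t-1}(s_{t+1}+xs_t)=s_t^2-s_{t-1}s_{t+1},
\]
which does not depend on $x$. The whole question thus reduces to Newton's inequality for elementary symmetric polynomials of the positive vector $\tau$ (of length $I-2$):
\[
s_t^2\ge s_{t-1}s_{t+1}.
\]
This is in fact the strengthened form $s_t^2\ge \frac{(t+1)(n-t+1)}{t(n-t)}s_{t-1}s_{t+1}$ with $n=I-2$, and it is strict for positive entries whenever $s_{t-1}s_{t+1}>0$. The boundary cases are trivial. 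When $t=0$ we have $s_{-1}=0$, so $g(x)=s_1+x$, which is strictly increasing. When $t+1>I-2$ we have $s_{t+1}=0$, while $s_t>0$ because $t\le I-2$, so $g'>0$ again. Since the denominators are positive, $g'(x)\ge0$, with strict inequality in all cases when the entries are positive.

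Given $\sigma_i<\sigma_{i+1}$, this yields $\rho_i^{(t)}=g(\sigma_{i+1})\ge g(\sigma_i)=\rho_{i+1}^{(t)}$, strictly under the strict order. Chaining over $i$ then gives \eqref{eq:rho-decreasing}.

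The only substantive step is the log-concavity $s_t^2\ge s_{t-1}s_{t+1}$. I would either cite Newton's inequalities or give a self-contained proof. The self-contained route is by induction on the number of variables, using the same append-one-variable recursion: if the sequence $(s_k)$ is a positive log-concave sequence with no internal zeros, then convolving it with $(1,x)$ preserves this property, because the product of polynomials with positive coefficients that are log-concave (with no internal zeros) again has log-concave coefficients. Strictness for a positive vector would follow from the strict form of Newton's inequality, so the strict claim also needs to be tracked through the boundary indices above.
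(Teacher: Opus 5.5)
Your proof is correct and follows essentially the paper's argument: both reduce the comparison to the derivative of $S_{t+1}/S_t$ with respect to a single appended variable, whose numerator $S_t^2-S_{t-1}S_{t+1}$ (evaluated on the vector with that variable deleted) is controlled by Newton's inequalities. The only difference is that you compare adjacent indices, where $\sigma_{-i}$ and $\sigma_{-(i+1)}$ differ in exactly one entry, instead of using the paper's componentwise dominance of sorted vectors; your explicit check of the boundary cases $t=0$ and $t=I-2$ also makes the strictness claim more transparent than the paper's bare assertion.
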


\begin{proof}
Let $x=(x_1,\ldots,x_{I-1})$ have positive components and define
\[
R_t(x):=\frac{S_{t+1}(x)}{S_t(x)}.
\]
We first show that $R_t$ is increasing in each coordinate.  Differentiating with respect to $x_j$ and writing $x_{-j}$ for the vector with $x_j$ deleted, we obtain
\[
\frac{\partial R_t}{\partial x_j}
=\frac{S_t(x)S_t(x_{-j})-S_{t+1}(x)S_{t-1}(x_{-j})}{S_t(x)^2}.
\]
Using
\[
S_t(x)=S_t(x_{-j})+x_jS_{t-1}(x_{-j}),
\qquad
S_{t+1}(x)=S_{t+1}(x_{-j})+x_jS_t(x_{-j}),
\]
the numerator reduces to
\[
S_t(x_{-j})^2-S_{t+1}(x_{-j})S_{t-1}(x_{-j}),
\]
which is nonnegative by Newton's inequalities for elementary symmetric
polynomials.  In fact, in the present range of indices and for positive
arguments, this quantity is strictly positive.  Hence $R_t$ is strictly
increasing in each coordinate.

If $i<j$, then the sorted vector $\sigma_{-i}$ componentwise dominates the
sorted vector $\sigma_{-j}$: deleting the smaller element $\sigma_i$ leaves a
vector no smaller in every coordinate than the vector obtained by deleting the
larger element $\sigma_j$.  Moreover, when
$0<\sigma_1<\cdots<\sigma_I$, this dominance is strict in at least one
coordinate.  Since $R_t$ is strictly increasing in each coordinate, we obtain
\[
R_t(\sigma_{-i})>R_t(\sigma_{-j}),
\]
and therefore
\[
\rho_i^{(t)}>\rho_j^{(t)}
\qquad\text{whenever } i<j.
\]
This proves \eqref{eq:rho-decreasing}, with strict inequalities under strict
ordering of the $\sigma_i$.
\end{proof}

\begin{lemma}[Weighted rearrangement inequality]\label{lem:rearrangement}
Let $(\rho_i)$ be nonincreasing and let $(\lambda_i)$ be nondecreasing.  For any positive weights $(b_i)$, define $a_i=\lambda_i b_i$.  Then
\begin{equation}\label{eq:rearrangement}
\frac{\sum_i a_i\rho_i}{\sum_i a_i}
\le
\frac{\sum_i b_i\rho_i}{\sum_i b_i}.
\end{equation}
\end{lemma}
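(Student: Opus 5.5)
The plan is to cross-multiply and reduce the claim to a double sum whose terms pair up with a definite sign. Since all $b_i>0$ and $\lambda_i\ge 0$ is implicit in calling $a_i$ weights, both denominators are positive. I will assume $\sum_i a_i>0$, which holds as soon as some $\lambda_i>0$. Inequality \eqref{eq:rearrangement} is then equivalent to
\[
\Delta:=\Bigl(\sum_i b_i\rho_i\Bigr)\Bigl(\sum_j a_j\Bigr)-\Bigl(\sum_i a_i\rho_i\Bigr)\Bigl(\sum_j b_j\Bigr)\ge 0 .
\]
Substituting $a_j=\lambda_j b_j$ and expanding both products as double sums over $(i,j)$ gives
\[
\Delta=\sum_{i,j} b_ib_j\,\rho_i(\lambda_j-\lambda_i).
\]

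Next I will symmetrize by exchanging the names $i\leftrightarrow j$ and averaging the two expressions. This yields
\[
\Delta=\tfrac12\sum_{i,j} b_ib_j(\rho_i-\rho_j)(\lambda_j-\lambda_i).
\]
Take any pair with $i<j$. Because $(\rho_i)$ is nonincreasing, $\rho_i-\rho_j\ge0$, and because $(\lambda_i)$ is nondecreasing, $\lambda_j-\lambda_i\ge0$. The same product is obtained for $i>j$, where both factors are $\le 0$. Since $b_ib_j>0$, every term is nonnegative, so $\Delta\ge0$, which is the claim. This is a Chebyshev sum inequality under the probability weights $b_i/\sum_j b_j$: it states that $\rho$ and $\lambda$ are negatively covariant, and that covariance equals $-\Delta/(\sum_j b_j)^2$.

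The argument is routine, and the only point needing care is the degenerate case. I must make sure $\sum_i a_i>0$, so that the left side of \eqref{eq:rearrangement} is defined. In the intended application the $\lambda_i$ will be ratios such as $E_i/F_i$ or $\sigma_i$-weighted shifted endowments, which are strictly positive under the standing positivity assumptions, so this condition holds automatically. I will state it explicitly anyway.
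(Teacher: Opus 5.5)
Your proof is correct and is essentially the paper's argument: both reduce the claim to the symmetrized weighted-covariance (Chebyshev) double sum $\sum_{i,j}b_ib_j(\lambda_i-\lambda_j)(\rho_i-\rho_j)\le 0$, with the paper centering at $\bar\rho_b=\sum_i b_i\rho_i/\sum_i b_i$ and you cross-multiplying and symmetrizing directly. Your explicit requirement $\sum_i a_i>0$ is a worthwhile addition, since the paper's equivalence step uses it tacitly and the inequality reverses when $\sum_i a_i<0$; in the application $\lambda_i=E_i\sigma_i/F_i>0$, so the condition holds.
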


\begin{proof}
Let $\bar\rho_b=\sum_i b_i\rho_i/\sum_i b_i$.  Inequality \eqref{eq:rearrangement} is equivalent to
\[
\sum_i b_i\lambda_i(\rho_i-\bar\rho_b)\le0.
\]
Using the standard identity for a weighted covariance,
\[
\sum_i b_i\lambda_i(\rho_i-\bar\rho_b)
=\frac{1}{2\sum_i b_i}
\sum_{i,j}b_ib_j(\lambda_i-\lambda_j)(\rho_i-\rho_j).
\]
Since $(\lambda_i)$ is nondecreasing and $(\rho_i)$ is nonincreasing, every product
$(\lambda_i-\lambda_j)(\rho_i-\rho_j)$ is nonpositive.  The result follows.
\end{proof}

\begin{theorem}[Sorting implies the TP$_2$ ratio chain]\label{thm:sorting-ratio}
Under Assumption~\ref{ass:sorting}, for every $t=0,\ldots,I-2$,
\begin{equation}\label{eq:D-positive-sorting}
D_t=\mu_{t+1}\nu_{t+1}-\mu_{t+2}\nu_t\ge0.
\end{equation}
Equivalently, the coefficient ratio chain \eqref{eq:ratio-chain} holds.
\end{theorem}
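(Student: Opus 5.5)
The plan is to rewrite each link of the chain as a comparison of two weighted averages of the same nonincreasing sequence $\rho^{(t)}_i$. Then Lemma~\ref{lem:rho-monotone} and Lemma~\ref{lem:rearrangement} close the argument. Fix $t\in\{0,\ldots,I-2\}$. All quantities $\mu_{t+1},\mu_{t+2},\nu_t,\nu_{t+1}$ are positive. So $D_t\ge0$ is equivalent to
\[
\frac{\mu_{t+2}}{\mu_{t+1}}\le\frac{\nu_{t+1}}{\nu_t}.
\]
For $t\le I-2$ the vector $\sigma_{-i}$ has $I-1\ge t+1$ positive entries. Hence $S_t(\sigma_{-i})>0$ and $S_{t+1}(\sigma_{-i})>0$, and $\rho_i^{(t)}$ from \eqref{eq:rho-def} is well defined and positive.

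Next I express both ratios through $\rho^{(t)}_i$. By \eqref{eq:nu-def}, with $S_{t+1}(\sigma_{-i})=\rho^{(t)}_iS_t(\sigma_{-i})$,
\[
\frac{\nu_{t+1}}{\nu_t}=\frac{\sum_i b_i\rho^{(t)}_i}{\sum_i b_i},
\qquad b_i:=F_iS_t(\sigma_{-i})>0 .
\]
By \eqref{eq:mu-def}, $\mu_{t+1}=\sum_iE_i\sigma_iS_t(\sigma_{-i})$ and $\mu_{t+2}=\sum_iE_i\sigma_iS_{t+1}(\sigma_{-i})$. Hence
\[
\frac{\mu_{t+2}}{\mu_{t+1}}=\frac{\sum_i a_i\rho^{(t)}_i}{\sum_i a_i},
\qquad a_i:=E_i\sigma_iS_t(\sigma_{-i})=\lambda_ib_i,\quad \lambda_i:=\frac{E_i\sigma_i}{F_i}.
\]

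Under Assumption~\ref{ass:sorting}, $E_i$ and $\sigma_i$ are positive and nondecreasing, while $F_i$ is positive and nonincreasing. Thus $\lambda_i$ is nondecreasing in $i$. By Lemma~\ref{lem:rho-monotone}, $\rho^{(t)}_i$ is nonincreasing in $i$. Lemma~\ref{lem:rearrangement} then gives exactly $\mu_{t+2}/\mu_{t+1}\le\nu_{t+1}/\nu_t$, i.e. $D_t\ge0$. Since $t$ was arbitrary, the full chain \eqref{eq:ratio-chain} follows from Definition~\ref{def:ratio-chain}.

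The only real step is spotting the common sequence $\rho^{(t)}_i$ in both ratios. The key is to take the weights relative to the $\nu$-side, so that the likelihood ratio $\lambda_i=E_i\sigma_i/F_i$ absorbs all three sorting conditions at once. Everything else is bookkeeping of index ranges, namely that $S_t(\sigma_{-i})\neq0$ for $t\le I-2$, and of positivity of the shifted endowments, which is a standing assumption.
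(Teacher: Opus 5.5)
Your proposal is correct and is essentially the paper's proof. It uses the same weights $b_i=F_iS_t(\sigma_{-i})$ and $a_i=E_i\sigma_iS_t(\sigma_{-i})=\lambda_i b_i$ with $\lambda_i=E_i\sigma_i/F_i$, the same common sequence $\rho_i^{(t)}$, and the same appeal to Lemmas~\ref{lem:rho-monotone} and~\ref{lem:rearrangement}; your explicit check that $S_t(\sigma_{-i})$ and $S_{t+1}(\sigma_{-i})$ are positive for $t\le I-2$ is bookkeeping the paper leaves implicit.
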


\begin{proof}
Fix $t\in\{0,\ldots,I-2\}$ and set
\[
\rho_i:=\rho_i^{(t)}=\frac{S_{t+1}(\sigma_{-i})}{S_t(\sigma_{-i})}.
\]
By Lemma~\ref{lem:rho-monotone}, $(\rho_i)$ is nonincreasing in $i$.

Now write
\[
b_i:=F_iS_t(\sigma_{-i}),
\qquad
a_i:=E_i\sigma_iS_t(\sigma_{-i}).
\]
Then
\[
\nu_t=\sum_i b_i,
\qquad
\nu_{t+1}=\sum_i b_i\rho_i,
\]
and
\[
\mu_{t+1}=\sum_i a_i,
\qquad
\mu_{t+2}=\sum_i a_i\rho_i.
\]
Moreover,
\[
\frac{a_i}{b_i}=\frac{E_i\sigma_i}{F_i}.
\]
Under Assumption~\ref{ass:sorting}, $E_i$ and $\sigma_i$ are nondecreasing, while $F_i$ is nonincreasing and positive.  Therefore $a_i/b_i$ is nondecreasing in $i$.

Applying Lemma~\ref{lem:rearrangement} gives
\[
\frac{\mu_{t+2}}{\mu_{t+1}}
=\frac{\sum_i a_i\rho_i}{\sum_i a_i}
\le
\frac{\sum_i b_i\rho_i}{\sum_i b_i}
=\frac{\nu_{t+1}}{\nu_t}.
\]
Multiplying by the positive denominator $\mu_{t+1}\nu_t$ yields
\[
\mu_{t+1}\nu_{t+1}-\mu_{t+2}\nu_t\ge0.
\]
This proves \eqref{eq:D-positive-sorting} for every $t$ and hence the ratio chain.
\end{proof}

\begin{corollary}[Global uniqueness under sorting]\label{cor:sorting-unique}
For any finite $I\ge2$ and any $\gamma>1$, if Assumption~\ref{ass:sorting} holds, then the competitive equilibrium price is globally unique.
\end{corollary}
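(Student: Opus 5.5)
The corollary follows by chaining the two preceding results, so the plan is short. Fix $I\ge2$ and $\gamma>1$, so that $\eps=1/\gamma\in(0,1)$ and $\theta=1/(\gamma-1)>0$ are well defined. Suppose Assumption~\ref{ass:sorting} holds. Theorem~\ref{thm:sorting-ratio} then gives $D_t\ge0$ for every $t=0,\ldots,I-2$, which is exactly the coefficient ratio chain \eqref{eq:ratio-chain}. Theorem~\ref{thm:ratio-unique} then yields a unique solution $r^*>0$ of \eqref{eq:intersection}.

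It remains to convert this back to prices. By Proposition~\ref{prop:intersection}, the equilibrium condition $Z(p)=0$ is equivalent to \eqref{eq:intersection} under $r=p^{1-\eps}$. The map $p\mapsto p^{1-\eps}$ is a strictly increasing bijection of $(0,\infty)$ onto itself because $1-\eps>0$. Hence $p^*=(r^*)^{1/(1-\eps)}$ is the unique equilibrium price. Existence comes from the existence half of Theorem~\ref{thm:ratio-unique}, transported through the same bijection.

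The only step I would check with care is that the hypotheses of the invoked results are really in force, since no genuinely new estimate is needed. First, all $\mu_t,\nu_t$ must be strictly positive. This holds because the $\sigma_i>0$, and the shifted endowments $E_i,F_i$ are positive by the standing assumptions; both are needed for the ratios in \eqref{eq:ratio-chain} and for the division by $F_i$ in Theorem~\ref{thm:sorting-ratio}. Second, Assumption~\ref{ass:sorting} is stated for $E_i,F_i$, whereas the economically stated condition concerns $e_i,f_i$. The two are equivalent because $h=b/(a\eps)$ is a common shift, and this holds for any $b\ge0$, including the CRRA case $b=0$. With these points verified, the corollary is an immediate composition of Theorem~\ref{thm:sorting-ratio}, Theorem~\ref{thm:ratio-unique}, and the monotone change of variables in Proposition~\ref{prop:intersection}.
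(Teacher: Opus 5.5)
Your proposal is correct and matches the paper's proof, which is exactly the composition of Theorem~\ref{thm:sorting-ratio} (sorting implies the ratio chain) and Theorem~\ref{thm:ratio-unique} (the ratio chain implies a unique equilibrium). Your additional checks are bookkeeping the paper leaves implicit: positivity of $\mu_t,\nu_t$, equivalence of the sorting condition on $(E_i,F_i)$ and on $(e_i,f_i)$ via the common shift $h$, and the bijection $p\mapsto p^{1-\eps}$.
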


\begin{proof}
By Theorem~\ref{thm:sorting-ratio}, sorting implies the coefficient ratio chain.  By Theorem~\ref{thm:ratio-unique}, the ratio chain implies global uniqueness.
\end{proof}

\begin{remark}[The CRRA benchmark and the HARA extension]\label{rem:crra-benchmark}
When $b=0$, HARA utility reduces to CRRA and Assumption~\ref{ass:sorting} is exactly the ordered-endowment restriction of \citet[Proposition~5]{GeanakoplosWalsh2018}: more patient types hold weakly more of the first good and weakly less of the second.  In that subcase Corollary~\ref{cor:sorting-unique} recovers their CRRA uniqueness result, an equivalence already recorded in \citet[Remark~4]{LoiMatta2022} for the two-type ($\gamma=3$) case.  The new content of the present paper is therefore not the sorting condition itself but its reach: the same ordered-endowment mechanism delivers global uniqueness throughout the non-homothetic HARA class ($b>0$), for every finite $I$ and every $\gamma>1$.

The argument is also different in kind.  \citet{GeanakoplosWalsh2018} obtain downward-sloping aggregate demand local to equilibrium through an income-effect (covariance) argument; here market clearing is transformed globally into $r^\theta=A(r)/B(r)$, and global monotonicity of $A(r)/B(r)$ follows from the coefficient-ratio chain.  
The two routes agree in their conclusion when $b=0$. For $b>0$, the income-effect ordering that drives the CRRA proof is no longer directly available (cf.\ \citealp{GeanakoplosWalsh2018}), and the present coefficient-ratio argument supplies the extension beyond the homothetic benchmark.
\end{remark}

\begin{remark}
The ordering of endowments in Assumption~\ref{ass:sorting} may be weak.  Since $\sigma_i$ is strictly increasing and $E_i,F_i>0$, the likelihood ratio
\[
\frac{E_i\sigma_i}{F_i}
\]
is strictly increasing even when all $E_i$ are equal and all $F_i$ are equal.  Moreover, Lemma~\ref{lem:rho-monotone} gives a strictly decreasing sequence $\rho_i^{(t)}$ when the $\sigma_i$ are strictly ordered.  Hence the adjacent cross-product inequalities in Theorem~\ref{thm:sorting-ratio} are strict under the maintained nondegeneracy conditions.
\end{remark}

\begin{remark}[Operational test]\label{rem:operational-test}
The sorting theorem gives a direct test for uniqueness.  First compute $\sigma_i=\beta_i^{1/\gamma}$ and order types so that $\sigma_1<\cdots<\sigma_I$.  Then check whether endowments can be ordered as $e_1\le\cdots\le e_I$ and $f_1\ge\cdots\ge f_I$.  If so, Corollary~\ref{cor:sorting-unique} gives global uniqueness for every $\gamma>1$.  If this sorting test fails, one can still compute the coefficients $(\mu_t)$ and $(\nu_t)$ and verify the ratio chain directly.
\end{remark}

\subsection{A numerical illustration}\label{subsec:numerical-illustration}

The criterion is easy to evaluate, and we choose a genuinely non-homothetic instance.  Consider $I=3$, $\gamma=4$, $a=1$, $b=1$, and
\[
\sigma=(1,2,4),\qquad e=(1,3,10),\qquad f=(10,3,1).
\]
This economy is sorted and substantially polarized.  Since $\eps=1/\gamma=1/4$, the endowment shift is $h=b/(a\eps)=4$, so
\[
E=e+h=(5,7,14),\qquad F=f+h=(14,7,5).
\]
From \eqref{eq:mu-def}--\eqref{eq:nu-def},
\[
\mu_1=75,
\qquad
\mu_2=268,
\qquad
\mu_3=208,
\]
and
\[
\nu_0=26,
\qquad
\nu_1=134,
\qquad
\nu_2=150.
\]
Therefore
\[
\frac{\nu_2}{\mu_3}=\frac{150}{208}\simeq0.721,
\qquad
\frac{\nu_1}{\mu_2}=\frac{134}{268}=0.5,
\qquad
\frac{\nu_0}{\mu_1}=\frac{26}{75}\simeq0.347.
\]
The ratio chain is strict.  Equivalently, the adjacent cross-product terms are
\[
D_0=\mu_1\nu_1-\mu_2\nu_0=3082>0,
\qquad
D_1=\mu_2\nu_2-\mu_3\nu_1=12328>0.
\]
Thus the transformed market-clearing ratio is decreasing, and the equilibrium price is unique.  Since $b>0$, this example lies strictly outside the CRRA case covered by \citet{GeanakoplosWalsh2018}.

\begin{remark}[Relation with the two-type theorem]
When $I=2$, Theorem~\ref{thm:sorting-ratio} reduces to the determinant inequality used in \citet{LoiMatta2024}.  Thus, the two-type HARA result is the first instance of a general multi-type HARA monotone-likelihood-ratio mechanism.  The present theorem shows that, within the sorted HARA class, no additional ``small dispersion'' restriction is needed, even for many types.
\end{remark}

\begin{remark}[Beyond sorted economies]\label{rem:beyond-sorted}
The present paper is intentionally restricted to the global sorting mechanism.  The sorting condition is not a smallness assumption: endowments may be highly dispersed as long as they are ordered across patience types.  A different and complementary problem is to guarantee uniqueness when endowments are not ordered.  That local question can be addressed by studying neighborhoods of the equal-share benchmark, where the same TP\(_2\) ratio chain follows from Newton inequalities and continuity.  We leave that non-sorted equal-share approach for future work, so that the global sorting theorem remains conceptually distinct.
\end{remark}

\section{Economic interpretation of sorting}\label{sec:econ}

The sorting condition
\begin{equation*}
0<\sigma_1<\cdots<\sigma_I,
\qquad e_1\le\cdots\le e_I,
\qquad f_1\ge\cdots\ge f_I
\end{equation*}
is a single-crossing restriction. More patient types are relatively more exposed to one good and less exposed to the other. In a two-date interpretation, where the two goods are dated commodities, the condition says that impatience and intertemporal endowment composition are ordered in a coherent way.

The proof shows why this ordering matters. For each adjacent coefficient inequality $D_t\ge0$, the relevant object is a comparison between two averages of
\begin{equation*}
\rho_i^{(t)}=\frac{S_{t+1}(\sigma_{-i})}{S_t(\sigma_{-i})}.
\end{equation*}
The elementary symmetric polynomial $S_t(\sigma_{-i})$ summarizes the $t$th-order aggregate composition of patience among all types except $i$. Thus the ratio $\rho_i^{(t)}$ measures the marginal increase in this residual patience composition when one moves from products of order $t$ to products of order $t+1$. If a low-index, less patient type is removed, the remaining vector $\sigma_{-i}$ is relatively more patient; if a high-index, more patient type is removed, the remaining vector is relatively less patient. Hence the values $\rho_i^{(t)}$ are high for low-index types and low for high-index types.

Sorting makes the $\mu$-weights relatively larger on high-index types than the $\nu$-weights. Hence the $\mu$-weighted average of $\rho_i^{(t)}$ is lower than the $\nu$-weighted average. This is exactly the sign condition needed for monotonicity of aggregate market clearing.

Thus, within the HARA coefficient representation, sorting is not merely an interpretation: it is the primitive condition that makes the transformed market-clearing ratio monotone, and hence the equilibrium price unique, throughout the high-curvature region.

\section{Concluding remarks}\label{sec:conclusion}

This paper proves global uniqueness in multi-type two-good HARA exchange economies under a sorted-endowment condition. The result holds for every finite number of types and every \(\gamma>1\). In the CRRA subcase \(b=0\), it recovers the ordered-endowment benchmark of \citet[Proposition~5]{GeanakoplosWalsh2018}. Within the HARA programme, it extends the two-type high-curvature result of \citet{LoiMatta2024} to an arbitrary finite number of types and complements the arbitrary-endowment low-curvature result of \citet{LoiMatta2022}.

The key step is the unified coefficient representation
\[
\mu_t=\sum_i E_i\sigma_i S_{t-1}(\sigma_{-i}),
\qquad
\nu_t=\sum_i F_i S_t(\sigma_{-i}).
\]
Under sorting, the likelihood ratio \(E_i\sigma_i/F_i\) is increasing, while the elementary-symmetric ratio \(S_{t+1}(\sigma_{-i})/S_t(\sigma_{-i})\) is decreasing. A rearrangement argument gives the TP\(_2\) coefficient-ratio chain, which makes the market-clearing ratio nonincreasing. Since the transformed price term is strictly increasing, the equilibrium price is unique.

The paper deliberately isolates this global sorting mechanism from the complementary local question of non-sorted economies near equal shares. Several extensions remain natural. First, the ratio-chain test can be used as a computational determinacy diagnostic in calibrated heterogeneous-agent environments, including cases where sorting fails but the coefficient inequalities can be checked directly. Second, the total-positivity structure uncovered here may be useful beyond HARA economies, especially in models where aggregate market clearing can be reduced to ratios of coefficient arrays generated by elementary symmetric polynomials.

\section*{Acknowledgements}

Stefano Matta gratefully acknowledges financial support from Fondazione di Sardegna, grant no. F83C26000350007.


\end{document}